\documentclass{article}

\usepackage{PRIMEarxiv}

\usepackage[utf8]{inputenc} 
\usepackage[T1]{fontenc}    
\usepackage{hyperref}       
\usepackage{url}            
\usepackage{booktabs}       
\usepackage{amsfonts}       
\usepackage{amssymb}
\usepackage{amsmath}
\usepackage{nicefrac}       
\usepackage{microtype}      
\usepackage{fancyhdr}       
\usepackage{graphicx}       
\usepackage{algorithm}
\usepackage{algpseudocode}
\usepackage{amsthm}
\newtheorem{definition}{Definition}
\newtheorem{theorem}{Theorem}
\usepackage{tikz}
\usepackage{array}
\newcolumntype{L}[1]{>{\raggedright\arraybackslash}p{#1}}
\usetikzlibrary{shapes,arrows,positioning,fit, arrows.meta, calc}
\graphicspath{{media/}}     

\title{Weird Machines in Transport Layer Security}

\author{
Michael Collins \\
Laboratory for Advanced Cybersecurity Research \\
Fort Meade, MD 20755 \\
\texttt{mdcolli@uwe.nsa.gov} \\
\And
Jada Cumberland \\
School of Cybersecurity \\
Old Dominion University \\
Norfolk, VA 23259 \\
\texttt{jcumb002@odu.edu} \\
\And
Brianne Dunn \\
Old Dominion University \\
Norfolk, VA 23259 \\
\texttt{bdunn003@odu.edu} \\
\And
Ross Gore \\
Center for Secure and Intelligent Critical Systems \\
Old Dominion University \\
Norfolk, VA 23259 \\
\texttt{rgore@odu.edu} \\
\And    
Samuel Jackson \\
Old Dominion University \\
Norfolk, VA 23259 \\
\texttt{sjack047@odu.edu} \\
\And
Sachin Shetty \\
Center for Secure and Intelligent Critical Systems \\
Department of Electrical and Computer Engineering \\
Old Dominion University \\
Norfolk, VA 23259 \\
\texttt{sshetty@odu.edu} \\
\And
Jonathan Takeshita \\
Department of Computer Science \\
School of Cybersecurity \\
Old Dominion University \\
Norfolk, VA 23259 \\
\texttt{jtakeshi@odu.edu} \\
}

\begin{document}
\maketitle

\begin{abstract}
Weird machines are latent computational capabilities that emerge from the composition of architectural components. Prior work has studied this phenomenon extensively in software systems, including x86 instructions, ELF metadata, and page tables, and more recently in cyber-physical systems such as industrial control networks. This paper extends weird machine theory to a new domain: the Transport Layer Security (TLS) handshake and its two dominant implementations, OpenSSL and BoringSSL.

We show that legitimate TLS primitives, including session cache entries, renegotiation logic, extension parsing, and certificate verification steps, compose into Turing-complete systems whose computation is coupled to authentication and trust decisions rather than physical actuation. We formalize this coupling, which we call trust actuation, and argue that any TLS implementation providing session storage, arithmetic on sequence counters, conditional branching on handshake state, and iteration through resumption or retry loops satisfies the conditions for arbitrary computation.

We validate this theory with two working demonstrations built on real OpenSSL code paths. The first, a sentinel system, composes standard TLS primitives into a defensive mechanism that detects anomalous handshake behavior. The second, an authentication bypass, composes the same class of primitives into an attack that defeats a cipher-strength policy check through mid-connection renegotiation, without any memory corruption or external malware. Both demonstrations run against real server and client binaries in Docker.

We then examine OpenSSL and BoringSSL, arguing from each library's documented design that BoringSSL removes or restricts several gadgets, such as default renegotiation, that OpenSSL leaves enabled for compatibility. This divergence suggests that library design choices, not just implementation bugs, shape the available attack and defense surface, though we validate this claim only against OpenSSL and treat the BoringSSL comparison as a documented hypothesis rather than an empirical result. Our results indicate that dual-use weird machine capability is not unique to cyber-physical systems. It is a general property of any protocol implementation with sufficient state, and TLS libraries must be evaluated with this in mind.
\end{abstract}

\keywords{Weird machines \and Transport Layer Security \and Turing completeness \and Trust actuation \and TLS renegotiation \and Authentication bypass}

\section{Introduction}

The concept of weird machines, referring to unintended computational systems arising from the composition of program components, emerged from the software exploitation community \cite{bratus2011exploit}. Researchers showed that software artifacts such as x86 MOV instructions \cite{domas2013mov}, ELF file metadata \cite{shapiro2013weird}, and page fault handling \cite{bangert2013printable} can be composed to perform arbitrary computation independent of a program's original intent. This prior work examined weird machines in purely computational contexts. The consequences of exploitation were limited to information disclosure or unauthorized computation.

Industrial control systems extended this idea to a domain where computation controls physical processes. Architectural primitives such as PLC timers and Modbus registers \cite{modbus2012protocol} compose into Turing-complete systems that can both cause harm and provide resilience. This paper extends the same question to a different domain: network security protocols, and specifically the Transport Layer Security (TLS) handshake as implemented in OpenSSL and BoringSSL.

TLS implementations present a different substrate than either software exploits or industrial control systems. State in a TLS session includes handshake progress, session cache entries, and certificate verification results. These are not physical quantities, but they carry consequences once a decision is made. Once a server authenticates a client or resumes a session, that decision persists and grants access. We call this property trust actuation. Computation in a TLS library does not move a motor or trip a relay. It grants or denies trust.

The recognition that security mechanisms can exhibit unintended computational power is not new. Harrison, Ruzzo, and Ullman (HRU) showed in 1976 that the safety question for access control matrices is undecidable \cite{harrison1976protection}. They proved this by encoding a Turing machine as a protection system. Subjects represented tape cells. Access rights represented tape symbols and machine states. State transitions were implemented through protection system commands \cite{bishop2002computer}. If an analyzer could determine whether a specific right would ever appear in a matrix cell, it would solve the Halting Problem. This result established that even carefully designed security systems can harbor unexpected computational power.

The HRU construction was deliberately artificial. The conditions needed to build a Turing machine from an access control matrix were impractical and never intended for deployment. Our work differs from HRU in one important way. TLS state machines are not artificial constructions. They exist in every production implementation of the protocol. Session caches, renegotiation logic, and certificate verification chains are required features, not contrived encodings. Any TLS library that supports session resumption and certificate-based authentication already provides the primitives needed for arbitrary computation.

Lamport's work on the glitch phenomenon and Buridan's Principle offers a useful frame for this observation \cite{lamport1974glitch, lamport1984buridan}. Lamport examined the RS flip-flop and observed that the metastable state occurs by design, not by defect. The state is declared undefined in logical specifications, yet it is an inherent property of the circuit's physical implementation. Applied to TLS, this principle suggests that weird machine potential is not a bug to be patched. It is a consequence of providing memory, arithmetic, conditionals, and iteration in a protocol state machine. A TLS library that removes these primitives loses functionality it needs for correct operation. A TLS library that keeps them retains the conditions for weird machine construction.

History already contains a documented case of a flawed TLS state machine. Beurdouche et al. tested popular open source TLS implementations and found that composite state machines allowed message sequences the protocol never intended \cite{beurdouche2015smack}. Their SMACK and related findings showed that state machine bugs enabled authentication bypasses across multiple libraries. CVE-2020-2655 showed a similar pattern in a Java TLS implementation, which could be tricked into skipping client authentication entirely by manipulating handshake message order \cite{cve2020_2655}. These are not new vulnerability classes. They are existing evidence that TLS state composition, not memory corruption, can defeat authentication.

Pavlovic and Seidel's work on security science provides a second useful lens \cite{pavlovic2025security}. They note a basic asymmetry in security claims. A precise security claim can be tested and proven false, but it can never be proven true. This asymmetry applies directly to TLS weird machines. When we construct a working authentication bypass from legitimate TLS primitives, we prove a specific case of insecurity. We do not need a bug in the traditional sense. We only need to show that the architecture allows a computation the designers did not intend.

OpenSSL and BoringSSL provide a natural pair for this study. Both implement the same protocol, but they diverge in design philosophy. OpenSSL enables renegotiation by default and accepts renegotiation requests from a peer without extra configuration \cite{boringssl2024porting}. BoringSSL removes this behavior. Google's engineers consider renegotiation an extremely problematic feature, and BoringSSL's smaller, more opinionated codebase reflects that judgment \cite{fastly2024boringssl}. This divergence means the same weird machine gadget can exist in one library and not the other, even though both libraries implement the same protocol standard.

This paper makes four contributions. First, we extend weird machine theory to TLS protocol implementations, formalizing trust actuation as the analog to physical actuation. Second, we define an eight-category gadget taxonomy specific to TLS implementations, covering read/write access, control-flow, communication bridging, security processing, timing and synchronization, arithmetic and computation, gadget composition and chaining, and externally visible I/O and side effects. Third, we build and validate two working demonstrations against real OpenSSL code paths: a defensive sentinel system and an authentication bypass, showing that the same class of gadgets supports both beneficial and adversarial outcomes. Fourth, we examine documented design differences between OpenSSL and BoringSSL and argue that library design choices, not implementation bugs alone, shape which weird machines are constructible, though we validate this claim only against OpenSSL.

The remainder of this paper proceeds as follows. Section~\ref{sec:background} reviews prior work on weird machines, TLS state machine vulnerabilities, and the history of OpenSSL and BoringSSL. Section~\ref{sec:theory} develops the theoretical framework, including the TLS gadget taxonomy and the argument that sufficient TLS primitives yield Turing completeness. Section~\ref{sec:testbed} describes our testbed and the gadgets available in it. Sections~\ref{sec:sentinel} and \ref{sec:authbypass} present the sentinel and authentication bypass demonstrations. Section~\ref{sec:dualuse} analyzes the dual-use property of the gadgets we identify, Section~\ref{sec:limitations} states the boundaries of these results, and Section~\ref{sec:conclusion} concludes with implications for TLS library design.

\section{Background and Related Work}
\label{sec:background}

\subsection{Turing Completeness in Security Systems}

The theoretical foundations for understanding computational emergence in security systems trace to Harrison, Ruzzo, and Ullman's (HRU's) 1976 proof that the safety problem for protection systems is undecidable \cite{harrison1976protection}. They demonstrated this result by encoding a Turing machine as an access control matrix. Subjects represented tape cells. Rights corresponded to tape symbols and machine states. Commands implemented state transitions. Any protection system capable of determining whether a specific right would ever appear in a matrix cell would thereby solve the Halting Problem.

The HRU result established that security analysis has fundamental limits. As Bishop notes in his computer security text \cite{bishop2002computer}, the construction required carefully crafted conditions and was deliberately artificial. The assumptions needed to build a Turing machine from an access control matrix were impractical for real systems. The result was primarily theoretical. It proved that perfect safety analysis is impossible, but the construction was never intended for deployment.

Our contribution differs from HRU because we examine Turing completeness in operational systems designed for deployment. TLS implementations provide memory, arithmetic, conditionals, and iteration not through contrived theoretical conditions but through the essential requirements of secure communication. The primitives enabling Turing completeness in TLS are not artifacts of clever encoding. They are necessary components for protocol functionality.

\subsection{Weird Machines in Software and Hardware Systems}

The weird machine concept emerged from exploit development research when Bratus et al. formalized the idea that buffer overflows and other memory corruption create unintended state machines programmable through attacker-controlled input \cite{bratus2011exploit}. Subsequent work demonstrated weird machine construction across several software contexts. At the instruction level, Domas showed that x86 MOV instructions alone are Turing-complete \cite{domas2013mov}. This result challenged assumptions about instruction set expressiveness and its relevance to exploitation. At the metadata level, Shapiro et al. demonstrated computation embedded in ELF file headers \cite{shapiro2013weird}. By crafting metadata fields such as section tables and relocation entries, an attacker can implement logic gates and control flow without executing traditional code. Bangert et al. extended this line of work to page fault handling, showing that page table manipulation alone supports instruction-less computation \cite{bangert2013printable}. Dullien later formalized the relationship between weird machines, exploitability, and provable unexploitability, arguing that some classes of weird machines can be shown not to exist under specific constraints \cite{dullien2018weird}.

These software results share a common limitation. They examine systems that operate entirely within the computational domain. Exploitation affects memory, control flow, or information leakage, but never a decision with external consequence such as authentication.

Fault injection research extends this idea into physical hardware exploitation. Voltage glitching, clock glitching, and related techniques induce faults that bypass security checks such as secure boot \cite{giller2015glitching}. The Xbox 360 reset glitch hack demonstrated practical fault injection using equipment costing under one hundred dollars \cite{xbox360glitch}. Murdock et al. showed that software-controlled voltage faults can extract secrets from trusted execution environments such as Intel SGX \cite{murdock2021weird}. Living Off the Land attacks represent a related pattern at a higher level of abstraction. These attacks use legitimate, preinstalled system tools for malicious purposes without introducing external malware \cite{lolbins2025crowdstrike}. The TA505 threat group used this technique extensively in 2018 phishing campaigns against financial institutions \cite{ta505lotl}. Living Off the Land attacks and weird machines share the same underlying principle. Both use intended functionality in unintended composition, and both create a dual-use problem in which removing the capable tool also removes legitimate functionality.

\subsection{Metastability and Beneficial Emergence}

Lamport's original paper with Palais established what they called the Principle of the Glitch. For any device making a discrete decision based on continuous inputs, there exist inputs that cause arbitrarily long decision times \cite{lamport1974glitch}. Lamport later formalized this observation as Buridan's Principle \cite{lamport1984buridan}. Anderson and Gouda extended this work to purely digital systems, proving that arbitrarily long metastable states can occur even in discrete domains \cite{anderson1991glitch}. Lamport's later work on interprocess communication further developed how systems handle these unavoidable timing ambiguities in distributed settings \cite{lamport1985interprocess}.

In examining the RS flip-flop, Lamport observed that when Set and Reset inputs transition simultaneously, the circuit enters an undefined metastable state where outputs oscillate unpredictably. This behavior is not a bug. It is an architectural property. The metastable state exists because of how flip-flops are designed, not despite it. This principle applies directly to any system that must make a discrete decision, including a TLS server deciding whether to accept a handshake message in a given state.

History provides a documented case of beneficial emergent computation under crisis conditions. Following the 1970 Apollo 13 oxygen tank explosion, the crew needed a way to remove carbon dioxide using components never designed to fit together \cite{lovell1994lost, edge2020apollo}. NASA's ground team built an adapter from suit hoses, plastic bags, and duct tape, then verbally transmitted assembly instructions to the crew \cite{nasa2008mailbox}. This improvised system saved three lives. It shows that architectural flexibility, when paired with well-understood components, can produce beneficial results that were never part of the original specification. This precedent motivates our claim that the same gadgets that enable a TLS authentication bypass can also enable a legitimate defensive mechanism.

\subsection{TLS Protocol and Implementation Landscape}

TLS secures the majority of traffic on the modern internet. The current version of the protocol, TLS 1.3, defines a handshake state machine with specific rules about message order, key derivation, and session resumption \cite{rfc8446}. Two implementations dominate production use today. OpenSSL is the general-purpose library used across most Linux servers and a wide range of applications. BoringSSL is Google's internal fork, created in 2014 after the Heartbleed vulnerability exposed a severe flaw in OpenSSL's heartbeat extension handling \cite{heartbleed2014}.

The two libraries diverge sharply in design philosophy. OpenSSL prioritizes backward compatibility and broad protocol support, including legacy features that many deployments no longer need \cite{stackademic2024differences}. BoringSSL prioritizes a smaller, more opinionated codebase suited to Google's internal needs, and it removes features that OpenSSL keeps for compatibility \cite{cossacklabs2017replacing}. One clear example is TLS renegotiation. OpenSSL enables renegotiation by default and accepts renegotiation requests from a peer without additional configuration \cite{boringssl2024porting}. BoringSSL removes this behavior entirely, and Google's engineers describe renegotiation as an extremely problematic feature \cite{fastly2024boringssl}. This single design decision changes which state transitions are reachable in each library, which matters directly for the gadget analysis in Section~\ref{sec:theory}.

\subsection{Prior TLS State Machine Vulnerability Research}

Prior research has already shown that flawed TLS state machines lead to authentication bypass. Beurdouche et al. systematically tested popular open source TLS implementations and discovered that composite state machines allowed message sequences the protocol specification never intended \cite{beurdouche2015smack}. Their SMACK findings demonstrated that a server could be tricked into skipping required handshake steps, including client authentication, by sending messages in an order the implementation did not anticipate. CVE-2020-2655 documented a similar flaw in a Java TLS implementation, where manipulating handshake message order allowed complete client authentication bypass \cite{cve2020_2655}.

These results establish an important precedent for our work. Both cases involved no memory corruption and no external malware. The vulnerability existed entirely in the composition of legitimate protocol states. This is exactly the property we formalize as a weird machine in Section~\ref{sec:theory}, and our authentication bypass demonstration in Section~\ref{sec:authbypass} builds on the same class of flaw using a purpose-built vulnerable server.

\subsection{Positioning This Work}

Our contribution is best understood relative to three lines of prior work, rather than as a single checklist of features no other line of work has. HRU proved Turing completeness in a theoretical access control construction never intended for deployment \cite{harrison1976protection}. Software weird machine research demonstrated Turing completeness in operational systems, but stayed entirely within the computational domain, where exploitation affects memory or control flow without an external consequence such as an authentication decision \cite{bratus2011exploit, domas2013mov, shapiro2013weird}. Prior TLS state machine research demonstrated authentication bypass through message reordering, but did not frame the result as an instance of weird machine theory, nor connect it to a broader gadget taxonomy \cite{beurdouche2015smack, cve2020_2655}.

Our work sits at the intersection of these three lines rather than superseding any of them. Like the software weird machine literature, we examine an operational system rather than a theoretical construction. Like the TLS state machine literature, our authentication bypass in Section~\ref{sec:authbypass} produces a working authentication bypass through message and event ordering rather than memory corruption. What we add is a connection between these two lines: we organize the primitives involved into an explicit gadget taxonomy, argue that the resulting composition satisfies the conditions for Turing completeness, following the same proof-sketch style HRU itself used \cite{harrison1976protection}, and build a working defensive counterpart alongside the exploit to show the same gadgets support both outcomes. We also examine documented design differences between OpenSSL and BoringSSL and argue, based on each library's public documentation, that these differences change which of our gadgets are available in each library. We validate this argument only against OpenSSL, as Section~\ref{sec:limitations} discusses in detail, and it should be read as a documented hypothesis about BoringSSL rather than a result established with the same rigor as the OpenSSL demonstrations in Sections~\ref{sec:sentinel} and \ref{sec:authbypass}.

\section{Theoretical Framework}
\label{sec:theory}

This section extends weird machine theory from physical systems to protocol implementations, establishing formal foundations for trust-coupled computation in TLS.

\subsection{From Physical Actuation to Trust Actuation}

Prior weird machine research examined systems where computation either stayed within a computer's memory or directly moved a physical actuator such as a motor or relay. TLS occupies a third position. State in a TLS session, such as handshake progress, a cached session ticket, or a verified certificate chain, is not a physical quantity, but it is not purely internal either. It has a consequence outside the computation that produced it. Once a server accepts a Finished message or resumes a session from a stored ticket, that decision grants network access to a peer. We call this property trust actuation.

Trust actuation differs from physical actuation in one important way. A motor position can be measured directly and independently of the controller that set it. A trust decision cannot. The only record that a client was authenticated is the state the TLS library itself maintains. This makes TLS weird machines harder to detect than their cyber-physical counterparts, since the same component that performs the computation also reports on its own correctness.

\subsection{Architectural Inevitability}

Following Lamport's analysis of metastability, we treat weird machine capability in TLS as architecturally inevitable rather than as an implementation defect \cite{lamport1974glitch, lamport1984buridan}. In Lamport's analysis of the RS flip-flop, designers intend two stable states corresponding to Set and Reset. The metastable state, where both outputs oscillate unpredictably, is declared undefined in the specification but exists in every physical implementation. This state cannot be removed without changing the circuit's basic architecture.

TLS state machines exhibit the same property. Designers intend specific handshake sequences, such as ClientHello followed by ServerHello followed by Finished \cite{rfc8446}. Arbitrary computation, which permits sequences the designers did not intend, is not part of the specification, but it exists in every implementation that provides session storage, arithmetic on sequence numbers, conditional checks on handshake state, and a loop that processes incoming messages. This capability cannot be removed without limiting what the protocol can do. A library that keeps session resumption, renegotiation, or extension parsing keeps the primitives that make weird machine construction possible.

\subsection{Architectural Gadgets in TLS Implementations}
\label{sec:gadgettypes}

We categorize the primitives available in a TLS implementation into eight gadget types, summarized in Table \ref{tab:tlsgadgets8}.

\begin{table}[h]
\centering
\caption{Architectural Gadgets in TLS Implementations}
\label{tab:tlsgadgets8}
\renewcommand{\arraystretch}{1.15}
\begin{tabular}{lp{9.5cm}}
\toprule
Gadget Type & Description \\
\midrule
Read/Write & Live reads of connection state, such as the negotiated cipher, protocol version, or peer certificate, exposed entirely through the public API \\
Control-Flow & Conditional branching driven by handshake state or by a cached decision, determining how the connection proceeds \\
Communication-Bridge & The point where a raw transport socket is bound to the TLS layer, carrying bytes between the network and the protocol state machine \\
Security-Processing & Verification and policy logic, including certificate chain validation and cipher strength checks \\
Timing/Synchronization & Callback hooks tied to handshake events, such as handshake start, handshake completion, and renegotiation \\
Arithmetic/Computation & Numeric and cryptographic operations, including digest computation, byte comparison, and bit-length checks \\
Composition/Chaining & The wiring between gadgets, where one gadget's output becomes the trigger or input for another \\
I/O and Side-Effect & Externally visible actions, such as logging, terminating a connection, or releasing application data \\
\bottomrule
\end{tabular}
\end{table}

A table alone does not show when each gadget type becomes relevant during a connection's life. Figure \ref{fig:gadgettimeline} places these eight categories along the TLS handshake timeline, from the initial TCP connection through Finished, and through an optional renegotiation or resumption event.

\begin{figure}[h]
\centering
\begin{tikzpicture}[
  node distance=1.3cm,
  stage/.style={draw, rectangle, rounded corners, minimum width=1.9cm, minimum height=0.75cm, align=center, font=\scriptsize, fill=blue!5},
  branchstage/.style={stage, fill=orange!10},
  gadget/.style={draw, rectangle, rounded corners, minimum width=2.0cm, minimum height=0.65cm, align=center, font=\scriptsize, fill=gray!8},
  arr/.style={-{Latex[length=1.6mm]}, thick},
  leader/.style={-{}, dashed, gray}
]

\node[stage] (tcp) {TCP\\Connect};
\node[stage, right=of tcp] (ch) {ClientHello};
\node[stage, right=of ch] (sh) {ServerHello +\\Certificate};
\node[stage, right=of sh] (fin) {Finished\\(handshake done)};

\draw[arr] (tcp) -- (ch);
\draw[arr] (ch) -- (sh);
\draw[arr] (sh) -- (fin);

\node[gadget, above=1.1cm of tcp] (g1) {Communication-\\Bridge};
\node[gadget, above=1.1cm of sh] (g2) {Read/Write};
\node[gadget, above=1.1cm of fin] (g3) {Control-Flow};

\draw[leader] (g1) -- (tcp);
\draw[leader] (g2) -- (sh);
\draw[leader] (g3) -- (fin);

\node[gadget, below=1.1cm of sh] (g5) {Arithmetic/\\Computation};
\node[gadget, below=1.1cm of fin] (g6) {Security-\\Processing};

\draw[leader] (g5) -- (sh);
\draw[leader] (g6) -- (fin);

\node[stage, below=3.6cm of tcp] (app) {Application\\Data};
\node[branchstage, right=of app] (reneg) {Renegotiation /\\Resumption};
\node[stage, right=of reneg] (app2) {Application\\Data\\(post-event)};

\draw[arr] (fin.south) -- ++(0,-2.6cm) -| (app.north);

\draw[arr] (app) -- (reneg);
\draw[arr] (reneg) -- (app2);

\node[gadget, above=1.1cm of reneg] (g4) {Timing/\\Synchronization};

\draw[leader] (g4) -- (reneg);

\node[gadget, below=1.1cm of app] (g7) {Composition/\\Chaining};
\node[gadget, below=1.1cm of app2] (g8) {I/O and\\Side-Effect};

\draw[leader] (g7) -- (app);
\draw[leader] (g8) -- (app2);

\draw[leader, red!60!black] (g6.south) to[out=-90, in=90] (reneg.north);
\node[above=0.15cm of reneg, xshift=1.4cm, font=\scriptsize, red!60!black, align=center] {re-check\\required here};

\end{tikzpicture}
\caption{Gadget Taxonomy Mapped Onto the TLS Handshake Timeline. Security-Processing and Timing/Synchronization gadgets, normally exercised once during the initial handshake, become available a second time at any renegotiation or resumption event.}
\label{fig:gadgettimeline}
\end{figure}

The red arrow in Figure \ref{fig:gadgettimeline} marks the point that matters most for the rest of this paper. Security-Processing gadgets are exercised once when an application first decides whether to trust a connection. Renegotiation and resumption reopen this same gadget a second time, since both events can change the property, such as cipher strength or certificate identity, that the original decision depended on. Nothing in the TLS specification or in OpenSSL requires an application to exercise Security-Processing again at this second point. Whether an implementation does so is a Composition/Chaining decision made entirely by the application, not by the library. Section~\ref{sec:testbed} grounds this observation in real code, showing one implementation that makes this second check and one that does not.

\subsection{Protocol-Interfaced Turing Completeness}

We now formalize the conditions under which a TLS implementation supports Turing-complete computation coupled with trust actuation.

\begin{definition}[Protocol-Interfaced Turing Machine]
\label{def:pitm}
Let $f$ denote an arbitrary computable function that a TLS implementation is asked to evaluate through the composition of its own gadgets. A TLS implementation $L$ realizes a protocol-interfaced Turing machine if there exist gadgets $G_M$, $G_A$, $G_C$, $G_I$, and $G_T$ such that
\[
L(f) = \text{output} \land \text{actuate}(\tau)
\]
where $G_M$ provides memory through Read/Write and Composition/Chaining gadgets that persist connection state, $G_A$ provides arithmetic through Arithmetic/Computation gadgets, $G_C$ provides conditionals through Control-Flow and Security-Processing gadgets, $G_I$ provides iteration through Timing/Synchronization gadgets that reenter the handshake, such as the message processing loop or the renegotiation cycle, and $G_T$ provides trust actuation through an I/O and Side-Effect gadget that converts accumulated state into an authentication or session decision $\tau$.
\end{definition}

Note that the eight gadget categories of Table~\ref{tab:tlsgadgets8} map many-to-one onto the five capabilities named in Definition~\ref{def:pitm}. Control-Flow and Security-Processing both feed $G_C$; Read/Write and Composition/Chaining both feed $G_M$. The eight categories are a taxonomy of where these capabilities appear in a TLS implementation, not five capabilities plus three extras. Table~\ref{tab:turingmapping} makes this correspondence explicit by mapping each gadget category directly onto the component of a standard Turing machine it realizes.

\begin{table}[h]
\centering
\caption{Mapping from TLS Gadget Categories to Turing Machine Components}
\label{tab:turingmapping}
\renewcommand{\arraystretch}{1.25}
\begin{tabular}{lll}
\toprule
Turing Machine Component & Supplied By (Gadget Category) & Role in $L(f)$ \\
\midrule
Tape (unbounded memory) & Read/Write, Composition/Chaining & $G_M$ \\
Symbol alphabet, arithmetic on tape contents & Arithmetic/Computation & $G_A$ \\
Transition function (state $\times$ symbol $\to$ state) & Control-Flow, Security-Processing & $G_C$ \\
Head movement, re-entry into the transition loop & Timing/Synchronization & $G_I$ \\
Halting output made externally observable & I/O and Side-Effect & $G_T$ \\
\bottomrule
\end{tabular}
\end{table}

\begin{theorem}[Protocol-Interfaced Turing Completeness]
\label{thm:tlsturing}
If a TLS implementation provides gadgets $G_M$, $G_A$, $G_C$, and $G_I$,
then the implementation can compute any computable function. If $G_T$ is
also present, the implementation can compute and simultaneously produce
a trust decision.
\end{theorem}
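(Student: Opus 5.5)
The proof will be a simulation argument in the style HRU used. We will not build a Turing machine directly. Instead we simulate a model already known to be Turing-complete, the two-counter (Minsky) machine, since its requirements line up with the four gadget classes named in the theorem. A Minsky machine needs only three things: two unbounded nonnegative registers, increment/decrement instructions, and a zero-test that selects the next instruction. These are iterated until a halt instruction is reached. By Minsky's theorem every partial computable function can be computed this way, after encoding inputs and outputs as $2^x3^y$ or a similar scheme. So it suffices to show that $G_M$, $G_A$, $G_C$, $G_I$ together realize one step of such a machine, and that $G_I$ lets that step be repeated without bound.

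The steps are as follows.

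\textbf{Step 1 (memory, via $G_M$).} Fix an encoding of the machine configuration $(q, c_1, c_2)$ into persisted connection state. One natural choice puts the finite control $q$ in a session-cache entry or ticket field. The two counters go into values that survive across handshake re-entry, such as sequence counters or stored per-session application data, which the Read/Write and Composition/Chaining gadgets can read and write.

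\textbf{Step 2 (arithmetic, via $G_A$).} Show that the Arithmetic/Computation gadget implements $c_i \mapsto c_i \pm 1$ on these stored values.

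\textbf{Step 3 (conditionals, via $G_C$).} Show that the Control-Flow or Security-Processing gadgets branch on the predicate $c_i = 0$ and on $q$. Together with the previous two steps, this realizes the finite transition table, handled case by case over instruction types.

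\textbf{Step 4 (iteration, via $G_I$).} Show that each pass through the message-processing loop or renegotiation/resumption cycle executes exactly one transition. An induction on the number of passes then proves that after $n$ passes the stored state encodes the machine's configuration after $n$ steps. Halting corresponds to the loop ceasing to re-enter, and the output is read from the counters.

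\textbf{Step 5 (trust actuation, via $G_T$).} For the second sentence of the theorem, compose the halting configuration with $G_T$. The I/O and Side-Effect gadget maps the final stored state to $\tau$, for example by accepting or terminating the connection according to a predicate on the output. This gives $L(f) = \text{output} \land \text{actuate}(\tau)$ exactly as in Definition~\ref{def:pitm}.

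The main obstacle is \emph{unboundedness}, which enters at both Step 1 and Step 4. Real sequence counters are fixed-width (for example 64-bit), cache entries are finite, and renegotiation loops may be rate-limited. Taken literally, the construction therefore yields only a linear-bounded, finite-state device. We plan to handle this the same way the weird-machine literature handles it for x86 MOV or page tables. The hypothesis "provides $G_M$" is read as an idealization: memory is extendable on demand, with unboundedly many session-cache entries or tickets indexed by session ID, so each counter is represented as a count of cache entries rather than as a single fixed-width field. Under that reading the simulation is exact. Without it, the honest statement is that the implementation can simulate any Turing machine up to any prescribed resource bound.

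A secondary subtlety is that the gadgets must be driven by the peer's input and by the application's callbacks rather than by the library alone. We will make explicit that "the implementation" in the theorem includes the composition wiring, meaning the Composition/Chaining choices, so that the zero-test can actually select the next handshake action.
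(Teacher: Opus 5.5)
Your plan is sound at the level of rigor the paper itself adopts, but it simulates a different machine model. The paper discharges the rows of Table~\ref{tab:turingmapping} directly against a standard Turing machine. Tape, head position, and control state all sit in the $G_M$ store (it cites \texttt{SSL\_set\_ex\_data}). $G_A$ is read loosely as the per-step ``alphabet operations,'' $G_C$ realizes $\delta$, and $G_I$ reapplies $\delta$ once per pass. Unboundedness is simply asserted, and rigor is deferred to the caveat in Section~\ref{sec:lim-proofs}.

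Your reduction to a two-counter machine asks much less of each gadget: only $\pm 1$ and a zero test on two registers. Your induction on passes states the simulation invariant that the paper leaves implicit. You also surface two assumptions the paper makes silently: that $G_M$ is idealized as unbounded, and that the finite control lives in the application's Composition/Chaining wiring rather than in the library.

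The price is the $2^x$ input/output encoding (or a third counter). You also commit $G_A$ to specific operations that the arithmetic gadgets the paper actually exhibits in Theorem~\ref{thm:turingcase} (SHA-256, \texttt{memcmp}, the bit-length comparison against 256) do not obviously supply. In your cache-entry encoding, increment and decrement in fact become $G_M$ insert/remove operations rather than $G_A$ arithmetic.

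Two small corrections. First, TLS record sequence numbers are reset to zero whenever a new connection state becomes active, i.e.\ at each ChangeCipherSpec. They therefore do not survive the renegotiation cycle you use for $G_I$ and cannot hold $c_i$; per-connection ex\_data or the session cache can. Second, without extendable memory you obtain simulation only up to a single bound fixed by field widths and cache size, not up to any prescribed bound.
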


\begin{proof}
Table~\ref{tab:turingmapping} identifies, for each component of a standard Turing machine, the gadget category in Table~\ref{tab:tlsgadgets8} that supplies it; the proof proceeds by discharging each row of that table in turn.

\emph{Tape.} Persistent connection state accessed and modified through Read/Write and Composition/Chaining gadgets (for example, values attached to a connection through \texttt{SSL\_set\_ex\_data}) supplies unbounded memory in the same role tape cells play in a standard Turing machine: an addressable store that persists across steps and can be read and overwritten by later steps.

\emph{Symbol alphabet and arithmetic.} Arithmetic/Computation gadgets, such as digest computation and byte- or bit-length comparison, supply the arithmetic needed to transform tape contents from one step to the next, standing in for the alphabet operations a Turing machine applies at each transition.

\emph{Transition function.} Control-Flow and Security-Processing gadgets supply conditional branching, since both categories decide how execution proceeds based on a comparison against stored state. Together they realize the transition function $\delta: Q \times \Sigma \to Q \times \Sigma \times \{L, R\}$: the current handshake state and the value read from persistent memory jointly determine the next branch taken and the next value written.

\emph{Head movement and re-entry.} Timing/Synchronization gadgets that reenter the handshake, most notably a renegotiation or resumption cycle, supply iteration in the same role the read/write head's movement and the machine's repeated application of $\delta$ play in a standard Turing machine: each pass through the cycle advances the computation by one step, updating memory and re-evaluating a condition before the next pass begins.

\emph{Composing the four.} When all four gadget types are present, an implementation can simulate an arbitrary Turing machine's transition function by encoding its tape, head position, and current state in the persistent memory the $G_M$ gadgets expose, with transition rules realized as branches inside the $G_C$ gadgets that the $G_I$ gadget triggers on each pass. This is precisely the construction Table~\ref{tab:turingmapping} states row by row, and since $G_M$, $G_A$, $G_C$, and $G_I$ are each independently realizable by the corresponding gadget categories, their composition realizes the full Turing machine.

\emph{Trust actuation.} If an I/O and Side-Effect gadget is also present that converts this accumulated state into a session or authentication outcome, the same computation simultaneously produces a trust decision, establishing the $G_T$ clause.

As with the original HRU construction \cite{harrison1976protection, tripunitara2013hru}, this argument identifies each required gadget and its role, via the explicit mapping in Table~\ref{tab:turingmapping}, rather than exhibiting a machine-checked encoding of a specific universal Turing machine; Section~\ref{sec:limitations} discusses this distinction directly. Theorem~\ref{thm:turingcase} depends on the mapping established here: it applies Table~\ref{tab:turingmapping} to the sentinel and vulnerable server directly, and the demonstrations in Sections~\ref{sec:sentinel} and~\ref{sec:authbypass} instantiate this general argument concretely, exhibiting the gadgets of Table~\ref{tab:tlsgadgets8} realized as specific OpenSSL API calls and showing directly that they compose as this proof requires.
\end{proof}

The significance of this theorem is its inevitability. Any TLS
implementation that supports session resumption, certificate-based
authentication, and standard handshake processing already provides $G_M$
through $G_T$. The question is not whether a given library supports
arbitrary computation, but whether the specific gadgets needed for a
given weird machine are present in that library's implementation.

This is where OpenSSL and BoringSSL diverge in a way that matters for
construction, not just for theory. OpenSSL enables renegotiation by
default and accepts renegotiation requests from a peer without
additional configuration \cite{boringssl2024porting}. This means $G_I$
in OpenSSL includes a renegotiation-based loop that an attacker can
trigger directly, and Section~\ref{sec:testbed} grounds this claim in a
concrete implementation. BoringSSL's documentation states that it
removes this gadget entirely, since Google's engineers judged
renegotiation to be an extremely problematic feature
\cite{fastly2024boringssl}. Based on this documented design decision, we
argue that a weird machine built on renegotiation-based iteration is not
constructible against BoringSSL in the same form, though both libraries
would still satisfy Theorem~\ref{thm:tlsturing} through other gadget
combinations such as session ticket resumption. We did not build a
BoringSSL testbed to confirm this by attempting the same renegotiation
call sequence against it. Section~\ref{sec:limitations} discusses this
gap directly.

\section{Case Study and Constructive Demonstrations}
\label{sec:testbed}

To validate the theoretical results in Section~\ref{sec:theory}, we built two Docker-based testbeds using OpenSSL as the target library. Like the sentinel in Section~\ref{sec:sentinel}, the vulnerable server and attacker client are built entirely against the public OpenSSL 1.1.1w API, with no forks or patches to the library itself. Both testbeds compose the same underlying gadgets. They differ only in whether the composing application re-derives its trust decision after that decision changes.

\subsection{System Architecture}

Our testbed consists of two independent demonstrations. The first, \texttt{wm1-sentinel-demo}, implements a defensive weird machine that re-verifies session state on every read. The second, \texttt{wm2-authbypass-demo}, implements an adversarial mirror image consisting of a vulnerable server that checks state exactly once and an attacker client that exploits that gap. Figure \ref{fig:testbed} shows both architectures. Section~\ref{sec:sentinel} and Section~\ref{sec:authbypass} present each demonstration in full.

The sentinel (\texttt{main.c}) connects to a target server, captures a fingerprint of the negotiated session immediately after the handshake completes, and re-verifies that fingerprint on every subsequent read and on every renegotiation attempt. The vulnerable server (\texttt{vuln\_server.c}) checks the negotiated cipher exactly once, at initial handshake completion, and caches the result as an \texttt{approved} flag. It never re-checks that flag against the live cipher on subsequent reads. The attacker client (\texttt{attacker\_client.c}) connects with a strong cipher to pass the one-time check, then calls \texttt{SSL\_renegotiate} to downgrade the connection to a weak cipher on the same TCP connection, and reads whether the server still serves privileged data under the stale \texttt{approved} flag.

\begin{figure}[h]
\centering
\begin{tikzpicture}[
  node distance=1.2cm and 1.4cm,
  box/.style={draw, rectangle, rounded corners, minimum width=3.4cm, minimum height=0.9cm, align=center, font=\small},
  lifeline/.style={draw, dashed, thick, gray},
  msg/.style={-{Latex[length=2mm]}, thick},
  msglabel/.style={font=\scriptsize, fill=white, inner sep=2pt}
]

\node[box] (sclient) {wm1\_sentinel\\(main.c)};
\node[box, below=2.5cm of sclient] (starget) {OpenSSL Target\\Server};
\draw[msg] ([xshift=0.3cm]sclient.south) -- node[right, msglabel] {handshake + reads} ([xshift=0.3cm]starget.north);
\draw[msg] ([xshift=-0.3cm]starget.north) -- node[left, msglabel] {fingerprint re-verify} ([xshift=-0.3cm]sclient.south);

\node[above=0.3cm of sclient, font=\small\bfseries] {wm1-sentinel-demo (defensive)};

\node[box, right=5.2cm of sclient] (aattacker) {attacker\_client.c};
\node[box, right=5.2cm of starget] (avictim) {vuln\_server.c};

\coordinate (aTop) at ([yshift=-0.05cm]aattacker.south);
\coordinate (aBot) at ([yshift=0.05cm]avictim.north -| aattacker);
\coordinate (vTop) at ([yshift=-0.05cm]aattacker.south -| avictim);
\coordinate (vBot) at ([yshift=0.05cm]avictim.north);

\draw[lifeline] (aTop) -- (aBot);
\draw[lifeline] (vTop) -- (vBot);

\coordinate (a1) at ([yshift=-0.5cm]aTop);
\coordinate (v1) at ([yshift=-0.5cm]vTop);
\coordinate (a2) at ([yshift=-1.1cm]aTop);
\coordinate (v2) at ([yshift=-1.1cm]vTop);
\coordinate (a3) at ([yshift=-1.7cm]aTop);
\coordinate (v3) at ([yshift=-1.7cm]vTop);

\draw[msg] (a1) -- node[above, msglabel] {1: strong-cipher handshake} (v1);
\draw[msg] (v2) -- node[above, msglabel] {2: approved = 1 (cached)} (a2);
\draw[msg] (a3) -- node[above, msglabel] {3: SSL\_renegotiate (weak cipher)} (v3);

\node[above=0.3cm of aattacker, font=\small\bfseries] {wm2-authbypass-demo (adversarial)};

\end{tikzpicture}
\caption{TLS Weird Machines Testbed Architecture}
\label{fig:testbed}
\end{figure}

\subsection{Architectural Gadget Inventory}
\label{sec:gadgetinventory}

Table \ref{tab:gadgetcode} maps the eight gadget categories defined in Section~\ref{sec:gadgettypes} onto the actual function calls used in \texttt{main.c} and \texttt{vuln\_server.c}. The two columns are strikingly similar. The difference between defense and exploit lies almost entirely in the Timing/Synchronization and Composition/Chaining rows.

\begin{table*}[!ht]
\centering
\small
\renewcommand{\arraystretch}{1.3}
\setlength{\tabcolsep}{8pt}
\begin{tabular}{lL{6.0cm}L{6.0cm}}
\toprule
Gadget Type & Sentinel (\texttt{main.c}) & Vulnerable Server (\texttt{vuln\_server.c}) \\
\midrule
Read/Write & \texttt{SSL\_get\_peer\_certificate}, \texttt{SSL\_get\_current\_cipher}, \texttt{SSL\_read}/\texttt{SSL\_write} & \texttt{SSL\_get\_current\_cipher}, \texttt{SSL\_read}/\texttt{SSL\_write} \\
Control-Flow & State enum (\texttt{UNARMED}/\texttt{ARMED}/\texttt{RENEG}) branching in \texttt{wm1\_info\_callback} & \texttt{approved} boolean branch in \texttt{vuln\_info\_callback} and the read loop \\
Communication-Bridge & \texttt{wm1\_connect\_tcp} plus \texttt{SSL\_set\_fd} bridges raw TCP into the TLS layer & Server-side \texttt{accept}/\texttt{SSL\_set\_fd} bridges the listening socket into TLS \\
Security-Processing & SPKI pin via \texttt{EVP\_DigestFinal\_ex}, \texttt{SSL\_get\_verify\_result} & \texttt{is\_strong\_cipher} checks GCM mode and key length via \texttt{SSL\_CIPHER\_get\_bits} \\
Timing/Synchronization & \texttt{SSL\_CB\_HANDSHAKE\_START}/\texttt{DONE} hooks fire on \textbf{every} handshake and renegotiation & \texttt{SSL\_CB\_HANDSHAKE\_DONE} hook fires once meaningfully; the \texttt{else} branch logs but never re-checks \\
Arithmetic/Computation & SHA-256 digest, \texttt{memcmp} fingerprint comparison & \texttt{SSL\_CIPHER\_get\_bits} integer comparison against 256 \\
Composition/Chaining & \texttt{wm1\_SSL\_read} chains a renegotiation check, a live read, and a fingerprint re-comparison into one call & The read loop chains \texttt{SSL\_read} directly to the cached \texttt{approved} flag with no intervening re-check \\
I/O and Side-Effect & \texttt{wm1\_alert} logs and \texttt{SSL\_shutdown} terminates on mismatch under strict mode & \texttt{SSL\_write} serves \texttt{SECRET\_PRIVILEGED\_DATA} whenever \texttt{approved} is stale-true \\
\bottomrule
\end{tabular}
\caption{Architectural Gadgets Shared Between the Sentinel and the Vulnerable Server. Functions prefixed \texttt{SSL\_} or \texttt{EVP\_} are public OpenSSL API calls; \texttt{wm1\_*} and \texttt{vuln\_*} functions are application-level code written for this testbed that composes those calls.}
\label{tab:gadgetcode}
\end{table*}

This side-by-side view makes the dual-use claim concrete rather than rhetorical. Both programs use \texttt{SSL\_CTX\_set\_info\_callback}, \texttt{SSL\_get\_current\_cipher}, and \texttt{SSL\_set\_ex\_data}/\texttt{SSL\_get\_ex\_data} persistence. The sentinel composes these gadgets so that Timing/Synchronization triggers a fresh Security-Processing check on every event. The vulnerable server composes the same gadgets so that Timing/Synchronization triggers a Security-Processing check exactly once, then routes every later Read/Write gadget through a stale Control-Flow branch instead.

\subsection{Turing Completeness}
\label{sec:turingcase}

\begin{theorem}
\label{thm:turingcase}
Both the sentinel implementation and the vulnerable server described
above are Turing-complete.
\end{theorem}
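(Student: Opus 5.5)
The plan is to apply Theorem~\ref{thm:tlsturing} directly to each program. Its hypothesis only asks that the program exhibit gadgets $G_M$, $G_A$, $G_C$, and $G_I$, so the work is to discharge each row of Table~\ref{tab:turingmapping} using the concrete OpenSSL calls already listed in Table~\ref{tab:gadgetcode}. I will take the sentinel and the vulnerable server in turn, row by row. The dual-use point then falls out: the same rows are discharged by the same classes of calls, and only the Composition/Chaining wiring differs.

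\emph{Tape ($G_M$).} Both programs attach application state to the connection through \texttt{SSL\_set\_ex\_data}/\texttt{SSL\_get\_ex\_data}. For the sentinel this is the stored fingerprint and state enum; for the server it is the \texttt{approved} flag. That store persists across callbacks and reads, and it can be overwritten by later steps. For the tape I would model memory as a sequence of ex\_data slots, or one growable buffer hung off a single slot, indexed by a head-position counter that is also stored there. Unboundedness is argued the same way the proof of Theorem~\ref{thm:tlsturing} argues it: the allocation is not a priori bounded except by host memory.

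\emph{Arithmetic ($G_A$).} For the sentinel I use SHA-256 plus \texttt{memcmp}. For the server I use the integer comparison of \texttt{SSL\_CIPHER\_get\_bits} against 256, together with the surrounding C arithmetic on stored values.

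\emph{Transition function ($G_C$).} The sentinel's \texttt{UNARMED}/\texttt{ARMED}/\texttt{RENEG} branching realizes $Q$. The server's \texttt{approved} branch, which is binary, realizes a two-state control. I would argue that a finite state set can be enlarged by storing the state index in ex\_data, so $|Q|$ is not limited by the visible enum.

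\emph{Iteration ($G_I$).} The \texttt{SSL\_CB\_HANDSHAKE\_START}/\texttt{DONE} hooks fire on every handshake and renegotiation, and both the read loop and \texttt{SSL\_renegotiate} can be re-entered indefinitely on one TCP connection. Each pass therefore applies $\delta$ once, exactly as in the proof of Theorem~\ref{thm:tlsturing}.

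I would then note that $G_T$ is also present. For the sentinel it is \texttt{wm1\_alert}/\texttt{SSL\_shutdown}, and for the server it is the privileged \texttt{SSL\_write}. This gives the trust-actuation clause as well.

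The main obstacle is the vulnerable server's $G_C$ and $G_I$ rows. Its callback meaningfully branches only once, and its \texttt{else} branch "logs but never re-checks", so it is not obvious that the code as written performs a data-dependent transition on each iteration. My first approach is to locate the iteration in the read loop rather than the callback. Each \texttt{SSL\_read} pass branches on stored state, and the peer-supplied byte read on that pass serves as the symbol. Together these give $Q\times\Sigma$ dependence. If that is not convincing for the code as literally written, I would fall back to the weaker reading that Theorem~\ref{thm:tlsturing} supports: the server's gadget set contains all four capabilities, so it is Turing-complete in the composition sense, as in HRU. I would state explicitly that no universal machine is encoded and defer that gap to Section~\ref{sec:limitations}.
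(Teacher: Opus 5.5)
Your proposal follows the paper's proof essentially line for line. The paper likewise discharges memory, arithmetic, conditionals, iteration, and trust actuation using the \texttt{SSL\_set\_ex\_data}-attached state structs, SHA-256/\texttt{memcmp} and the \texttt{SSL\_CIPHER\_get\_bits} comparison against 256, the branching in \texttt{wm1\_info\_callback}/\texttt{vuln\_info\_callback}, the read loop plus renegotiation path, and \texttt{SSL\_shutdown}/the privileged \texttt{SSL\_write}, and then simply invokes Theorem~\ref{thm:tlsturing}. The obstacle you flag for the vulnerable server is genuine: its state is written once and never updated, so no data-dependent transition occurs on later passes. The paper does not resolve it either; its argument stays at the gadget-presence level of your fallback, as Section~\ref{sec:limitations} concedes.
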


\begin{proof}
Memory is provided in both programs by a persistent connection-state
struct, \texttt{wm1\_fingerprint\_t} in the sentinel and
\texttt{vuln\_conn\_state\_t} in the vulnerable server, each attached to
the connection through \texttt{SSL\_set\_ex\_data} and retrieved through
\texttt{SSL\_get\_ex\_data}. Arithmetic is provided by SHA-256 digest and
\texttt{memcmp} operations in the sentinel, and by the
\texttt{SSL\_CIPHER\_get\_bits} comparison against 256 in the vulnerable
server. Conditionals are provided by state branching in
\texttt{wm1\_info\_callback} and \texttt{vuln\_info\_callback}, both of
which decide how to proceed based on the outcome of a prior check.
Iteration is provided by each program's main read loop and by the
renegotiation path, which reenters the handshake state machine without
closing the underlying socket. Trust actuation is provided by
\texttt{SSL\_shutdown} in the sentinel and by the conditional
\texttt{SSL\_write} of \texttt{SECRET\_PRIVILEGED\_DATA} in the
vulnerable server, both of which convert accumulated state into an
externally visible outcome. By Theorem~\ref{thm:tlsturing}, both programs
can compute any computable function while simultaneously producing a
trust decision.
\end{proof}

This result establishes that Turing completeness alone says nothing about correctness. Both programs satisfy Theorem~\ref{thm:tlsturing}. Only one of them uses that capability to enforce its stated policy. Section~\ref{sec:sentinel} shows how the shared gadgets are composed for defense, and Section~\ref{sec:authbypass} shows how the same gadgets are composed for exploitation.

\section{Sentinel Demo: Beneficial Weird Machine}
\label{sec:sentinel}

The sentinel demonstrates that the same gadgets available to an attacker can be composed defensively. It is built entirely against the public OpenSSL 1.1.1w API, with no forks and no patches to the library itself.

The sentinel's design rests on a simple idea. A TLS connection is not static. Its cipher, its certificate, and its verification result can all change after the handshake completes, most commonly through renegotiation. Most applications check these properties once and never look again. The sentinel instead treats every read and every renegotiation as an opportunity to ask the same question again: does the live connection still match what was true when trust was first established.

The mechanism works in three phases. In the first phase, the sentinel captures a fingerprint immediately after the initial handshake completes, triggered by the \texttt{SSL\_CB\_HANDSHAKE\_DONE} event inside \texttt{wm1\_info\_callback}. This fingerprint records the negotiated cipher, the protocol version, a SHA-256 hash of the peer certificate's public key, and the certificate verification result. The sentinel stores this fingerprint in a struct attached to the connection through \texttt{SSL\_set\_ex\_data}, and moves its internal state from \texttt{UNARMED} to \texttt{ARMED}.

In the second phase, every call to \texttt{wm1\_SSL\_read} wraps the underlying \texttt{SSL\_read} with a fresh fingerprint capture and comparison. If the live fingerprint no longer matches the armed snapshot, the sentinel logs a violation through \texttt{wm1\_alert} and, when running in strict mode, calls \texttt{SSL\_shutdown} to terminate the connection immediately. This phase runs on every single read for the life of the connection, not just once.

In the third phase, the sentinel handles renegotiation directly. \texttt{wm1\_info\_callback} detects \texttt{SSL\_CB\_HANDSHAKE\_START} while already armed and moves its state to \texttt{RENEG}. Any read attempted during this window stalls inside \texttt{wm1\_SSL\_read} until the renegotiation handshake completes, at which point the sentinel captures a new fingerprint and compares it against the original snapshot before allowing the read to proceed. This closes exactly the gap that the vulnerable server described in Section~\ref{sec:testbed} leaves open.

We built and ran the sentinel against two scenarios inside our Docker testbed. Against a well-behaved server that never renegotiates, the sentinel completed the connection with no violations, confirming that the fingerprint comparison does not misfire under normal operation. Against a server that renegotiates down to a weaker cipher mid-connection, the same downgrade path used by \texttt{attacker\_client.c} against \texttt{vuln\_server.c}, the sentinel detected the mismatch on the first read following renegotiation and logged a violation before returning any further application data. This confirms that the same renegotiation path an attacker uses to defeat a stale policy check is directly observable to a client that re-verifies its own state, rather than trusting a single check made at the start of the connection.

The sentinel's design mirrors a lesson from the Apollo 13 mailbox. NASA's engineers did not invent a new component to solve the carbon dioxide problem. They recomposed suit hoses, plastic bags, and duct tape, materials already present on the spacecraft, into a configuration the original design never specified. The sentinel does the same with software. It introduces no new cryptographic primitive and no modification to OpenSSL itself. It only recomposes functions that already exist in the public API, \texttt{SSL\_get\_current\_cipher}, \texttt{SSL\_get\_peer\_certificate}, and \texttt{SSL\_get\_verify\_result}, into a loop that checks more often than a typical application checks. The capability to build this sentinel was present in OpenSSL the entire time. It simply required recognizing that the same gadgets an attacker chains together to defeat a stale check can be chained instead to eliminate the staleness altogether.

\section{Authbypass Demo: Adversarial Weird Machine}
\label{sec:authbypass}

The vulnerable server states a policy that sounds reasonable on its own. Privileged data should only be served over a connection using an AEAD cipher with a 256-bit key. The function \texttt{is\_strong\_cipher} implements this check directly, reading the negotiated cipher through \texttt{SSL\_get\_current\_cipher} and confirming both a GCM mode string and a bit length of at least 256. Nothing about this check is wrong. The flaw is not in what the server checks. It is in when the server checks it.

\texttt{vuln\_info\_callback} runs this check exactly once, inside the branch that fires the first time \texttt{SSL\_CB\_HANDSHAKE\_DONE} occurs. The result is cached in a per-connection struct as a boolean, \texttt{approved}, attached to the SSL object through \texttt{SSL\_set\_ex\_data}. Every subsequent read in the server's main loop trusts this cached flag without ever calling \texttt{is\_strong\_cipher} again. The callback does fire a second time if the connection renegotiates, but its \texttt{else} branch only logs the new cipher to \texttt{stderr}. It never re-derives the policy decision. This resembles a known class of anti-pattern in TLS deployments: a security-relevant property is checked once at connection setup, and the application's logic assumes it remains valid for the life of the connection, even though TLS explicitly allows that property to change mid-connection through renegotiation.

The attacker exploits this gap using nothing but the public OpenSSL client API, in a sequence \texttt{attacker\_client.c} carries out in three steps. First, the attacker connects honestly. \texttt{SSL\_set\_cipher\_list} is set to \texttt{ECDHE-RSA-AES256-GCM-SHA384} before the initial handshake, a cipher that satisfies the server's policy. \texttt{SSL\_connect} completes, \texttt{vuln\_info\_callback} runs its one-time check, \texttt{is\_strong\_cipher} returns true, and \texttt{approved} is set to 1. At this point the server's cached decision and the live connection state agree, and the attacker's first request for \texttt{GET\_SECRET} returns nothing of value, since the server has not yet been asked to serve the actual privileged payload in this exchange.

Second, the attacker renegotiates. The attacker calls \texttt{SSL\_set\_cipher\_list} again, this time with \texttt{AES128-SHA}, a cipher with no forward secrecy and a 128-bit key that would fail \texttt{is\_strong\_cipher} outright. The attacker then calls \texttt{SSL\_renegotiate} followed by \texttt{SSL\_do\_handshake}, which reenters the TLS state machine on the same TCP connection and completes a full new handshake under the weaker cipher. OpenSSL permits this by design. Nothing in the protocol or the library prevents a peer from renegotiating to a weaker cipher suite than the one originally negotiated, and the server's \texttt{SSL\_CTX\_set\_cipher\_list} was deliberately configured with \texttt{ALL:@SECLEVEL=0} to allow exactly this range, mirroring real-world servers that support broad cipher ranges for backward compatibility.

Third, the attacker collects the reward. \texttt{vuln\_info\_callback} fires again on the completed renegotiation and logs the new cipher, but the cached \texttt{approved} flag from the first handshake is never touched. The server's main loop continues to route every \texttt{SSL\_read} through that stale flag. The attacker sends the same \texttt{GET\_SECRET} request it sent before the downgrade, and the server responds with \texttt{SECRET\_PRIVILEGED\_DATA=42}, now traveling over a connection that no longer satisfies the server's own stated policy.

We built and ran this exploit end to end inside our Docker testbed. The attacker's own diagnostic output confirms each stage of the attack directly: it logs the cipher negotiated on the initial handshake, logs the renegotiation to a weaker cipher, and checks the server's response for the string \texttt{SECRET\_PRIVILEGED\_DATA} to confirm whether the bypass succeeded. In our runs, the server served the privileged payload after the downgrade in every trial, since nothing in \texttt{vuln\_server.c} ever invalidates the cached decision once it is set. We also ran the sentinel from Section~\ref{sec:sentinel} against this same downgrade path as a point of direct comparison, and it detected the mismatch on the first post-renegotiation read, confirming that the vulnerable server's failure is a composition choice, not a limitation of what OpenSSL exposes to the application.

This exploit requires no memory corruption, no malformed input, and no code outside the public OpenSSL API. The attacker calls four functions, \texttt{SSL\_set\_cipher\_list}, \texttt{SSL\_renegotiate}, \texttt{SSL\_do\_handshake}, and \texttt{SSL\_read}, in a sequence the server's author did not anticipate. Every one of these calls is also available to, and used by, the sentinel in Section~\ref{sec:sentinel}. The vulnerability is not a defect in any single function. It is the absence of one composition step: a fresh Security-Processing check chained to the Timing/Synchronization event of a completed renegotiation. The server has that event available. It simply does not act on it a second time.

\section{Dual-Use Analysis}
\label{sec:dualuse}

Sections \ref{sec:sentinel} and \ref{sec:authbypass} demonstrated that the sentinel and the vulnerable server draw from the same gadget inventory. This section makes that symmetry explicit, then argues that the symmetry is not incidental. It is a structural property of any TLS implementation that supports renegotiation, and removing it carries a cost.

\subsection{Gadget Symmetry}

Figure \ref{fig:statemachines} makes the divergence between these two programs explicit as a pair of state machines. The sentinel's machine, on the left, forms a cycle: every renegotiation forces a return trip through \texttt{ARMED} before the connection can produce output again, and a mismatch routes to termination rather than continuing silently. The vulnerable server's machine, on the right, has no such cycle. Once it reaches \texttt{STALE}, no path leads back to a fresh check. The server continues to serve privileged data from that state indefinitely, regardless of how many times the underlying cipher changes.

\begin{figure}[h]
\centering
\begin{tikzpicture}[
  node distance=2.3cm,
  state/.style={draw, circle, minimum size=1.6cm, align=center, font=\small},
  goodstate/.style={state, fill=green!10},
  badstate/.style={state, fill=red!10},
  neutralstate/.style={state, fill=gray!10},
  tr/.style={-{Latex[length=2mm]}, thick},
  trlabel/.style={font=\scriptsize, align=center}
]

\node[neutralstate] (unarmed) {UNARMED};
\node[goodstate, right=of unarmed] (armed) {ARMED};
\node[goodstate, below=of armed] (reneg) {RENEG};

\draw[tr] (unarmed) -- node[trlabel, above] {handshake\\done} (armed);
\draw[tr] (armed.south east) to[bend left=15] node[trlabel, right] {renegotiation\\starts} (reneg.north east);
\draw[tr] (reneg.north west) to[bend left=15] node[trlabel, left] {re-verify\\match} (armed.south west);
\draw[tr] (reneg) to[loop right] node[trlabel, right] {mismatch:\\alert + shutdown} (reneg);

\node[above=1.1cm of unarmed, xshift=1.1cm, font=\small\bfseries] {Sentinel (main.c)};
\node[below=2.3cm of reneg, font=\scriptsize, align=center, text width=4.5cm]
  {Every renegotiation forces\\a return trip through a fresh check.\\No path skips re-verification.};

\node[neutralstate, right=6cm of unarmed] (unchecked) {UNCHECKED};
\node[goodstate, right=of unchecked] (checked) {CHECKED\\(approved=1)};
\node[badstate, below=of checked] (stale) {STALE\\(approved=1,\\cipher changed)};

\draw[tr] (unchecked) -- node[trlabel, above] {handshake\\done (1st time)} (checked);
\draw[tr] (checked) -- node[trlabel, right] {renegotiation\\completes} (stale);
\draw[tr] (stale) to[loop right] node[trlabel, right] {SSL\_read:\\serves secret} (stale);

\node[above=1.1cm of unchecked, xshift=1.1cm, font=\small\bfseries] {Vulnerable Server (vuln\_server.c)};
\node[below=2.3cm of stale, font=\scriptsize, align=center, text width=4.5cm]
  {STALE has no path back\\to CHECKED. The cached\\decision never expires.};

\end{tikzpicture}
\caption{State Machine Comparison: Sentinel Re-Verification vs. Vulnerable Server's One-Shot Check}
\label{fig:statemachines}
\end{figure}

Table \ref{tab:gadgetcode} in Section~\ref{sec:testbed} already showed that the sentinel and the vulnerable server use nearly identical gadgets. Both rely on \texttt{SSL\_CTX\_set\_info\_callback} for Timing/Synchronization, both call \texttt{SSL\_get\_current\_cipher} for Security-Processing, and both persist state across reads using \texttt{SSL\_set\_ex\_data} and \texttt{SSL\_get\_ex\_data}. The single point of divergence is Composition/Chaining. The sentinel chains a fresh Security-Processing check to every Timing/Synchronization event. The vulnerable server chains a Security-Processing check to only the first such event, then routes every later Read/Write gadget through a stale Control-Flow branch instead.

This resembles a similar class of pattern documented in prior TLS state machine research. Beurdouche et al. found that composite state machines allowed message sequences the protocol specification never intended, producing authentication bypasses across multiple libraries without any memory corruption \cite{beurdouche2015smack}. CVE-2020-2655 showed a structurally comparable issue in a different implementation, where reordering handshake messages skipped client authentication entirely \cite{cve2020_2655}. In each of these cases, including our own, the vulnerability is not a broken gadget. It is a missing composition step between two gadgets that both function exactly as designed.

The practical consequence is that no single gadget in our taxonomy can be labeled safe or unsafe in isolation. Renegotiation is not inherently dangerous. A sentinel that re-verifies on renegotiation and a vulnerable server that does not are built from the same renegotiation gadget. What determines the outcome is entirely how many times, and at which events, the application chains a Security-Processing check to the rest of the connection lifecycle.

\subsection{The Cost of Elimination}

One response to this symmetry is to remove the gadget entirely. If renegotiation enables both the sentinel's defensive re-verification and the attacker's downgrade, removing renegotiation from a TLS library would appear to close the vulnerability at its root. This is close to the position BoringSSL has taken. BoringSSL disables renegotiation by default, and Google's engineers describe it as an extremely problematic feature of the protocol \cite{boringssl2024porting, fastly2024boringssl}.

This response works, but it is not free. The same renegotiation path that
the attacker used to downgrade a cipher is also the path a legitimate
application would use to step up authentication mid-connection, for
example to request a client certificate only after a user reaches a page
that requires it, without tearing down and rebuilding the entire
connection. Removing renegotiation removes this capability along with
the vulnerability. This mirrors the argument we made about metastability
in Section~\ref{sec:theory}. A flip-flop cannot be redesigned to
eliminate its metastable state without changing what a flip-flop is. A
TLS library cannot remove every gadget capable of adversarial
composition without also removing the gadgets that legitimate
applications depend on for flexibility. The choice is not between a
secure library and an insecure one. It is between a library that keeps a
capability and manages its risk through careful application design, and
a library that removes the capability and accepts the loss of whatever
legitimate use case depended on it.

Lamport's own extension of this argument sharpens the point further.
Buridan's Principle states that any mechanism built to eliminate a
discrete decision's indecision does not remove the underlying
possibility of indecision. It only produces a new mechanism of the same
general type, one that still admits inputs capable of driving it into an
arbitrarily long or ambiguous decision time, even if those inputs are
no longer the same ones that afflicted the original design
\cite{lamport1984buridan}. Removing renegotiation from BoringSSL is this
same move applied to a TLS library. The specific renegotiation-based
gadget is gone, but the library still provides memory, arithmetic,
conditionals, and iteration through the gadgets it retains, such as
session ticket resumption, and Lamport's argument gives us reason to
expect that some composition of those remaining gadgets admits an
analogous weird machine. The difference is not that the underlying
possibility has been eliminated. It is that the specific inputs
capable of exploiting it are no longer the ones this paper has
identified, and, following Lamport's own point about the RS flip-flop,
there is no general procedure for locating them in advance. BoringSSL's
design choice narrows where the risk can occur. It does not certify
that no such gadget exists in what remains.

\subsection{Library-Level Consequences}

OpenSSL and BoringSSL made different choices along this tradeoff, and our testbed shows the practical effect of that difference on OpenSSL's side. OpenSSL enables renegotiation by default and accepts a peer's renegotiation request without additional configuration \cite{boringssl2024porting}. This is precisely the behavior \texttt{vuln\_server.c} relies on. The server's permissive cipher list, set through \texttt{SSL\_CTX\_set\_cipher\_list(ctx, "ALL:@SECLEVEL=0")}, combined with OpenSSL's default acceptance of renegotiation, is what allows \texttt{attacker\_client.c} to call \texttt{SSL\_renegotiate} and succeed without the server rejecting the request outright. The exploit in Section~\ref{sec:authbypass} is constructible in OpenSSL specifically because OpenSSL keeps this gadget available and low-friction to invoke.

BoringSSL's documentation indicates that it closes this path differently than an application-level fix would. Rather than requiring every application built on BoringSSL to remember to re-verify state after a renegotiation, BoringSSL removes the renegotiation gadget itself for most connection types \cite{boringssl2024porting, fastly2024boringssl}. Based on this documented removal, we argue that an application built against BoringSSL would not be able to reproduce the same attack in the same form, not because BoringSSL contains better cipher-strength checking logic, but because the Timing/Synchronization gadget the attacker depends on is not present in the same form. As noted in Section~\ref{sec:limitations}, we did not build a BoringSSL testbed to confirm this directly, so this claim rests on BoringSSL's documented design decision rather than on an attempted reproduction.

This has a direct implication for how the two libraries should be evaluated. A security review of an OpenSSL-based application cannot stop at checking whether cryptographic primitives are implemented correctly. It must also check whether every security-relevant decision is re-derived after every event that can legally change the state that decision depended on, since OpenSSL will not prevent an application from making that mistake. A security review of a BoringSSL-based application starts from a narrower set of gadgets, which reduces the number of places a composition error like the one in \texttt{vuln\_server.c} could occur, but it does not eliminate the general risk. Any gadget BoringSSL does retain, such as session ticket resumption, remains available for exactly the same kind of adversarial composition, and an application that caches a decision across a resumed session without re-checking it would be vulnerable to a variant of the same attack, built from a different gadget in the same taxonomy.

The dual-use property we have described is therefore not a defect that either library can patch away. It is a consequence of providing memory, arithmetic, conditionals, and iteration in a protocol state machine, as Theorem~\ref{thm:tlsturing} established. OpenSSL and BoringSSL differ in which specific gadgets they expose, and that difference changes which weird machines are constructible in each library. Neither library can offer the full flexibility TLS applications sometimes need while also removing every gadget capable of adversarial composition. Library maintainers can only choose where along that tradeoff their default configuration sits, and application developers must know which gadgets their chosen library retains.

\section{Limitations and Future Work}
\label{sec:limitations}

The results in Sections~\ref{sec:theory}--\ref{sec:dualuse} establish that
TLS weird machines are constructible and that OpenSSL and BoringSSL differ
in which gadgets they expose. This section states plainly what those
results do not establish, and what closing each gap would require. We
take this stance deliberately: a claim of dual-use architectural risk is
only useful to the community if its boundaries are stated as precisely as
the claim itself.

\subsection{BoringSSL is argued, not demonstrated}
\label{sec:lim-boringssl}

Our comparison between OpenSSL and BoringSSL in Section~\ref{sec:dualuse}
is a documentation-level argument. We show that BoringSSL's porting guide
and public engineering statements describe renegotiation as disabled by
default \cite{boringssl2024porting, fastly2024boringssl}, and we reason
from that description to the conclusion that the specific authentication
bypass in Section~\ref{sec:authbypass} is not constructible against a
BoringSSL-linked version of \texttt{vuln\_server.c}. We did not build a
BoringSSL testbed, port \texttt{attacker\_client.c} against
\texttt{libssl} from BoringSSL, or attempt the renegotiation call sequence
against a real BoringSSL server to confirm that it is rejected, fails
silently, or fails in some other way than we assume. It is possible that
BoringSSL exposes a different, narrower renegotiation-like gadget, for
example through session tickets or a restricted post-handshake
authentication path, that supports a variant of the same attack under
different function names. Closing this gap requires building a second
Docker testbed with BoringSSL in place of OpenSSL, attempting the
identical three-step exploit from Section~\ref{sec:authbypass} against it,
and reporting the concrete failure mode rather than inferring one from
documentation. Until that testbed exists, the library-level comparison in
Section~\ref{sec:dualuse} should be read as a documented hypothesis, not
an empirical result on the same footing as the OpenSSL demonstrations.

\subsection{Testbed scope: EOL library, localhost, no network adversary}
\label{sec:lim-testbed}

Both demonstrations run against OpenSSL 1.1.1w, a version that has passed
its end-of-life date and no longer receives security patches. We chose
this version because it retains the legacy renegotiation API our gadgets
depend on, but this means our results describe a TLS library version most
production deployments should no longer be running. We did not verify
whether the same gadget inventory and the same missing composition step
persist in an actively maintained OpenSSL 3.x branch, where renegotiation
handling has been partially restructured. A finding that holds only on a
deprecated branch is a weaker claim than one that holds on the currently
maintained line, and a fuller treatment would repeat the gadget inventory
in Table~\ref{tab:gadgetcode} against OpenSSL 3.x before drawing
conclusions about current deployments.

Separately, both the sentinel and the authentication bypass run inside a
single Docker Compose network, with client and server as adjacent
containers exchanging traffic over localhost-equivalent virtual
interfaces. There is no network path with realistic latency, packet loss,
middlebox interference, or an on-path or off-path network adversary
distinct from the TLS peer itself. This matters because several real-world
constraints that would affect exploit reliability, such as TCP
retransmission interacting with a mid-connection renegotiation, or a
middlebox that strips or rewrites renegotiation records, cannot surface in
a Docker-internal network. A fuller evaluation would repeat the
authentication bypass against physically or virtually separated hosts,
across an actual network path, and would report whether the same
three-call sequence still succeeds without modification.

\subsection{Theorems 1 and 2 are proof sketches}
\label{sec:lim-proofs}

The proofs given for Theorem~\ref{thm:tlsturing} and Theorem~\ref{thm:turingcase}
follow the same style as the original HRU undecidability argument
\cite{harrison1976protection, bishop2002computer}. They identify a gadget
in the implementation for each required primitive, memory, arithmetic,
conditionals, and iteration, and argue informally that composing them
yields arbitrary computation. This is standard practice in the weird
machine literature \cite{bratus2011exploit, domas2013mov,
shapiro2013weird}, but it falls short of a machine-checked or fully
formalized proof. In particular, we do not exhibit an explicit encoding of
an arbitrary Turing machine's transition function into the specific
gadgets of \texttt{main.c} and \texttt{vuln\_server.c}, in the way Domas's
MOV result gives a complete instruction-translation table
\cite{domas2013mov} or the way HRU's original proof gives an explicit
command-to-transition mapping \cite{harrison1976protection}. Our claim is
that the necessary primitives are present and composable, which is
sufficient to support the paper's central argument about architectural
inevitability, but it is weaker than a constructive completeness proof.
Closing this gap, in the spirit of Dullien's more formal treatment of
weird machine exploitability \cite{dullien2018weird}, would mean encoding
a specific universal Turing machine directly into the gadget set of
Table~\ref{tab:tlsgadgets8} and verifying the encoding mechanically rather
than by inspection.

\subsection{No adaptive or red-team attacker}
\label{sec:lim-adaptive}

The authentication bypass in Section~\ref{sec:authbypass} is a single,
fixed, scripted exploit: \texttt{attacker\_client.c} always negotiates
honestly, always renegotiates down to the same weak cipher, and always
issues the same follow-up request. We did not build an adaptive attacker
that varies its strategy based on server responses, probes for which
composition errors are present before committing to an attack path, or
attempts to defeat the sentinel from Section~\ref{sec:sentinel} directly.
This is the same category of gap as in prior TLS state-machine work
\cite{beurdouche2015smack, cve2020_2655}, where the reported flaw is a
single reproducible message sequence rather than the output of a search
over the full space of reachable states. A more demanding evaluation would
implement a red-team process that treats the gadget taxonomy in
Table~\ref{tab:tlsgadgets8} as a search space, systematically attempting
compositions the vulnerable server's author did not anticipate, and would
also attempt to evade or defeat the sentinel's fingerprint re-verification
rather than assuming it holds. Building this adversary is a natural next
step once the base gadget taxonomy and the single-path demonstrations here
are taken as validated, but it is a distinct undertaking from what this
paper attempts.

\subsection{Single vulnerable pattern, single protocol version}
\label{sec:lim-scope}

\texttt{vuln\_server.c} encodes exactly one composition error: a
cipher-strength check performed once at initial handshake and never
re-derived after renegotiation. Section~\ref{sec:dualuse} argues, but does
not demonstrate, that other gadgets in Table~\ref{tab:tlsgadgets8}, such as
session ticket resumption, support variants of the same attack pattern.
We did not build a second vulnerable server around session-ticket reuse or
around extension-parsing state to confirm this generalization empirically.
Our theoretical framework is also stated and evaluated only against TLS
1.2-style renegotiation; TLS 1.3 removed renegotiation as originally
specified and replaced it with post-handshake authentication
\cite{rfc8446}, and we did not evaluate whether an analogous composition
error is constructible against that mechanism. Extending the case study to
a second vulnerable pattern and to TLS 1.3's post-handshake authentication
path would test whether the architectural-inevitability argument in
Section~\ref{sec:theory} generalizes beyond the single mechanism
demonstrated here, or whether some TLS gadgets are more resistant to this
class of composition error than others.

\subsection{Constraining composition order as a library-level defense}
\label{sec:lim-orderconstraint}

This paper follows the weird machine literature in treating the
attacker's input as the mechanism of exploitation: an adversary supplies a
sequence of otherwise-legitimate API calls in an order the
implementation's author did not anticipate \cite{bratus2011exploit}. Our
framing adds a second axis to this picture. The authentication bypass in
Section~\ref{sec:authbypass} is not attacker-supplied input in the
traditional sense; \texttt{attacker\_client.c} calls only functions the
OpenSSL API already exposes, in an order the API itself permits. The
vulnerability is therefore not just a property of what the attacker
supplies, but of what the library's public interface allows to be
supplied in what order. Section~\ref{sec:dualuse} frames this as a
Composition/Chaining decision made by the application, not the library,
but that framing leaves open whether the library itself could take on
part of that responsibility.

This suggests a defense we do not evaluate here: rather than removing a
gadget outright, as BoringSSL does with renegotiation
(Section~\ref{sec:lim-boringssl}), a library could instead constrain the
\emph{permitted order} of gadget invocation at the API boundary, rejecting
call sequences that reach a Security-Processing decision without a
subsequent re-check, in the same way a protocol conformance checker
rejects out-of-order handshake messages. Such a mechanism would preserve
the underlying capability, such as renegotiation, while narrowing which
compositions of it are reachable through the public API. Concretely, this
would mean the library tracking, per connection, whether a
Security-Processing gadget has been invoked since the most recent
Timing/Synchronization event, and refusing to serve data through a
Read/Write gadget until it has. We did not design or evaluate an
order-constraint mechanism of this kind; doing so would require
specifying, for each gadget category in Table~\ref{tab:tlsgadgets8}, which
sequences of calls are safe to permit, which is a nontrivial design
problem in its own right, since an overly strict ordering constraint risks
reintroducing the same loss of legitimate flexibility discussed in
Section~\ref{sec:dualuse}. We leave this as a natural next step beyond the
taxonomy this paper establishes, distinct from both the gadget-removal
approach BoringSSL takes and the application-level discipline OpenSSL
currently requires.

None of these gaps affect the paper's central claim: the sentinel and the
vulnerable server are built from the same public OpenSSL gadgets, and the
difference between defense and exploit is a missing composition step
rather than a memory-safety bug.

\section{Conclusion}
\label{sec:conclusion}

This paper extended weird machine theory to network security protocols, using the TLS handshake as
implemented in OpenSSL and BoringSSL as a case study. We made four
contributions. First, we formalized trust actuation as the protocol-level
analog to physical actuation, showing that a TLS implementation's
computation is coupled not to a motor or a relay but to an authentication
decision that persists once made. Second, we defined a gadget taxonomy
specific to TLS, covering handshake state, session storage, renegotiation
and resumption, security processing, and the other categories detailed in
Section~\ref{sec:theory}, and argued, following the same proof-sketch
style as the classical HRU result, that any implementation providing
these gadgets satisfies the conditions for Turing completeness. Third, we
built and validated two working demonstrations against real OpenSSL code
paths, a sentinel that re-verifies session state on every read and an
authentication bypass that exploits a server's failure to do the same,
showing directly that the same gadgets support both outcomes. Fourth, we
compared OpenSSL and BoringSSL and argued that library design choices,
not implementation bugs alone, determine which weird machines are
constructible in each, a claim we demonstrated directly for OpenSSL and
inferred from documented design decisions for BoringSSL.

The central finding of this paper is that Turing completeness in a TLS
implementation is not a defect to be found and patched. It is a necessary
consequence of providing the memory, arithmetic, conditionals, and
iteration that session resumption and renegotiation require. The
vulnerable server in Section~\ref{sec:authbypass} contains no bug in the
traditional sense. Every function it calls behaves exactly as OpenSSL
specifies. Its failure is a missing composition step, one point where a
cached decision should have been re-derived and was not. This mirrors the
lesson from Lamport's metastable flip-flop and from the HRU access
control result. The capability was never absent from the specification
by oversight. It was present because the underlying primitives make it
unavoidable.

This finding carries a direct implication for how TLS libraries should be
designed and evaluated. A library cannot remove every gadget capable of
adversarial composition without also removing the flexibility that
legitimate applications depend on. Library maintainers should therefore treat
gadget inventory as a first-class design decision, not an afterthought to
be discovered later through vulnerability disclosure. Documenting which
gadgets a library exposes, and under what conditions those gadgets can be
chained together, would let application developers reason about
composition risk the same way they already reason about cryptographic
primitive choice. OpenSSL's approach of retaining broad functionality
places this burden on the application. BoringSSL's approach of removing
entire gadget classes places the burden on the library, at the cost of
flexibility some applications require. Neither choice eliminates the
underlying property. It is a general characteristic of any protocol
implementation with sufficient state, and future library design should
treat it as such rather than as a series of individual bugs to be fixed
one at a time.

\section*{Code Availability}

The code for the two demonstrations described in this paper, \texttt{wm1-sentinel-demo} and \texttt{wm2-authbypass-demo}, is publicly available at \url{https://github.com/rossgore/weird_machine_gadgets/tree/main/tls-weird-machines}. The repository includes the sentinel implementation (\texttt{main.c}), the vulnerable server (\texttt{vuln\_server.c}), the attacker client (\texttt{attacker\_client.c}), and the Docker Compose configuration used to build and run both testbeds described in Section~\ref{sec:testbed}.

\section*{Author Contributions}

The following author contributions are categorized in Table \ref{tab:author_contrib} according to the \textit{CRediT (Contributor Roles Taxonomy)} \cite{credit2022contributor}. The author order in this paper is strictly alphabetical and does not imply relative levels of contribution.

\begin{table*}[!ht]
\centering
\renewcommand{\arraystretch}{1.2}
\resizebox{\textwidth}{!}{%
\begin{tabular}{lccccccccccccc}
\toprule
\textbf{Author} &
\rotatebox[origin=c]{60}{\parbox{3.2cm}{\centering Conceptualization}} &
\rotatebox[origin=c]{60}{\parbox{3.2cm}{\centering Formal Analysis}} &
\rotatebox[origin=c]{60}{\parbox{3.2cm}{\centering Investigation}} &
\rotatebox[origin=c]{60}{\parbox{3.2cm}{\centering Methodology}} &
\rotatebox[origin=c]{60}{\parbox{3.2cm}{\centering Resources}} &
\rotatebox[origin=c]{60}{\parbox{3.2cm}{\centering Software}} &
\rotatebox[origin=c]{60}{\parbox{3.2cm}{\centering Visualization}} &
\rotatebox[origin=c]{60}{\parbox{3.2cm}{\centering Writing – Original Draft}} &
\rotatebox[origin=c]{60}{\parbox{3.2cm}{\centering Writing – Review \& Editing}} &
\rotatebox[origin=c]{60}{\parbox{3.2cm}{\centering Funding Acquisition}} &
\rotatebox[origin=c]{60}{\parbox{3.2cm}{\centering Project Administration}} &
\rotatebox[origin=c]{60}{\parbox{3.2cm}{\centering Supervision}} &
\rotatebox[origin=c]{60}{\parbox{3.2cm}{\centering Validation}} \\
\midrule
Michael Collins        & X &   & X & X &   &   &   &   & X &   &   & X &   \\
Jada Cumberland        &   &   & X & X & X & X & X & X &   &   &   &   &   \\
Brianne Dunn           &   &   & X & X & X & X & X & X &   &   &   &   &   \\
Ross Gore              & X &   &   & X &   &   &   &   & X &   & X & X &   \\
Samuel Jackson         &   &   & X & X & X & X & X & X &   &   &   &   &   \\
Sachin Shetty          &   &   &   &   &   &   &   &   &   & X & X & X &   \\
Jonathan Takeshita     &   & X &   & X &   &   &   &   &   &   &   &   & X \\
\bottomrule
\end{tabular}%
}
\caption{Author contributions per CRediT taxonomy (X indicates contribution). \label{tab:author_contrib}}
\end{table*}

\section*{Acknowledgments}

This work was supported by an INSuRE+C AY 25-26 grant through the National Center of Academic Excellence in Cybersecurity (NCAE). Computational resources were provided by Old Dominion University.

\bibliographystyle{unsrt}  
\bibliography{references}

@misc{rfc8446,
author = {Eric Rescorla},
title = {The Transport Layer Security ({TLS}) Protocol Version 1.3},
howpublished = {IETF RFC 8446},
year = {2018},
url = {https://www.rfc-editor.org/rfc/rfc8446}
}

@article{tripunitara2013hru,
  author = {Mahesh V. Tripunitara and Ninghui Li},
  title = {The Foundational Work of Harrison-Ruzzo-Ullman Revisited},
  journal = {IEEE Transactions on Dependable and Secure Computing},
  volume = {10},
  number = {1},
  pages = {28--39},
  year = {2013}
}

@misc{heartbleed2014,
author = {{MITRE}},
title = {{CVE-2014-0160}: {OpenSSL} Heartbeat Extension Buffer Over-Read},
year = {2014},
howpublished = {NIST NVD},
url = {https://nvd.nist.gov/vuln/detail/CVE-2014-0160}
}

@misc{stackademic2024differences,
author = {{Stackademic}},
title = {{OpenSSL} vs {BoringSSL}: Understanding the Differences},
year = {2024},
howpublished = {Stackademic Blog},
url = {https://blog.stackademic.com/openssl-vs-boringssl-understanding-the-differences-534764cdb24e}
}

@misc{cossacklabs2017replacing,
author = {{Cossack Labs}},
title = {Replacing {OpenSSL} with {BoringSSL} in a Complex Multi-Platform Application},
year = {2017},
howpublished = {Cossack Labs Blog},
url = {https://www.cossacklabs.com/blog/replacing-openssl-with-boringssl/}
}

@inproceedings{beurdouche2015smack,
author = {Benjamin Beurdouche and Karthikeyan Bhargavan and Antoine Delignat-Lavaud and Cedric Fournet and Markulf Kohlweiss and Alfredo Pironti and Pierre-Yves Strub and Jean Karim Zinzindohoue},
title = {A Messy State of the Union: Taming the Composite State Machines of {TLS}},
booktitle = {IEEE Symposium on Security and Privacy (S\&P)},
year = {2015},
pages = {535--552},
doi = {10.1109/SP.2015.39}
}

@misc{cve2020_2655,
author = {{NIST National Vulnerability Database}},
title = {{CVE-2020-2655}: Java {SE} {JSSE} Client Authentication Bypass},
year = {2020},
howpublished = {NIST NVD},
url = {https://nvd.nist.gov/vuln/detail/CVE-2020-2655}
}

@misc{boringssl2024porting,
author = {{Google BoringSSL Project}},
title = {Porting from {OpenSSL} to {BoringSSL}},
year = {2024},
howpublished = {BoringSSL Documentation},
url = {https://boringssl.googlesource.com/boringssl/+/main/PORTING.md}
}

@misc{fastly2024boringssl,
author = {{Fastly}},
title = {{BoringSSL} to Make {TLS} More Secure},
year = {2024},
howpublished = {Fastly Blog},
url = {https://www.fastly.com/blog/boringssl-to-make-tls-more-secure}
}

@article{bratus2011exploit,
  author  = {Sergey Bratus and Meredith E. Locasto and Michael L. Patterson and Len Sassaman and Anna Shubina},
  title   = {Exploit Programming: From Buffer Overflows to Weird Machines and Theory of Computation},
  journal = {;login: The USENIX Magazine},
  volume  = {36},
  number  = {6},
  pages   = {13--21},
  year    = {2011},
}

@article{lamport1984buridan,
  author  = {Leslie Lamport},
  title   = {Buridan's Principle},
  journal = {Foundations of Physics},
  volume  = {14},
  number  = {11},
  pages   = {1111--1122},
  year    = {1984},
  doi     = {10.1007/BF00728853},
}

@article{lamport1985interprocess,
  author  = {Leslie Lamport},
  title   = {On Interprocess Communication},
  journal = {Distributed Computing},
  volume  = {1},
  number  = {2},
  pages   = {77--101},
  year    = {1986},
  doi     = {10.1007/BF01786228},
}

@misc{xbox360glitch,
  author       = {{Free60 Project}},
  title        = {Reset Glitch Hack: Technical Documentation},
  year         = {2011},
  howpublished = {Xbox Hacking Community},
  url          = {https://free60.org/Reset_Glitch_Hack},
  note         = {Voltage glitching to bypass secure boot},
}

@inproceedings{domas2013mov,
  author    = {Christopher Domas},
  title     = {M/o/Vfuscator: Turning {MOV} into a Soul-Crushing Compiler},
  booktitle = {DEF CON 23},
  year      = {2013},
}

@inproceedings{shapiro2013weird,
  author    = {Rebecca Shapiro and Sergey Bratus and Sean W. Smith},
  title     = {Weird Machines in {ELF}: A Spotlight on the Underappreciated Metadata},
  booktitle = {USENIX Workshop on Offensive Technologies (WOOT)},
  year      = {2013},
}

@article{credit2022contributor,
  title={Contributor roles taxonomy},
  author={CRediT, CASRAI},
  journal={URL: https://credit.niso.org},
  year={2022}
}

@inproceedings{bangert2013printable,
  author    = {Julian Bangert and Sergey Bratus and Rebecca Shapiro and Sean W. Smith},
  title     = {The Page-Fault Weird Machine: Lessons in Instruction-Less Computation},
  booktitle = {USENIX Workshop on Offensive Technologies (WOOT)},
  year      = {2013},
}

@article{dullien2018weird,
  author  = {Thomas Dullien},
  title   = {Weird Machines, Exploitability, and Provable Unexploitability},
  journal = {IEEE Transactions on Emerging Topics in Computing},
  volume  = {8},
  number  = {2},
  pages   = {391--403},
  year    = {2018},
  doi     = {10.1109/TETC.2017.2785299},
}

@inproceedings{murdock2021weird,
  author    = {Kit Murdock and David Oswald and Flavio D. Garcia and Jo Van Bulck and Frank Piessens and Daniel Gruss},
  title     = {Plundervolt: Software-Based Fault Injection Attacks Against Intel {SGX}},
  booktitle = {IEEE Symposium on Security and Privacy (S\&P)},
  year      = {2020},
  pages     = {1466--1482},
  doi       = {10.1109/SP40000.2020.00057},
}

@book{edge2020apollo,
  title={Apollo 13: a successful failure},
  author={Edge, Laura B},
  year={2020},
  publisher={Twenty-First Century Books (Tm)}
}

@article{harrison1976protection,
  author  = {Michael A. Harrison and Walter L. Ruzzo and Jeffrey D. Ullman},
  title   = {Protection in Operating Systems},
  journal = {Communications of the ACM},
  volume  = {19},
  number  = {8},
  pages   = {461--471},
  year    = {1976},
  doi     = {10.1145/360303.360333},
}

@book{bishop2002computer,
  author    = {Matt Bishop},
  title     = {Computer Security: Art and Science},
  publisher = {Addison-Wesley},
  year      = {2002},
  isbn      = {978-0201440997},
}

@book{lovell1994lost,
  author    = {Jim Lovell and Jeffrey Kluger},
  title     = {Lost Moon: The Perilous Voyage of Apollo 13},
  publisher = {Houghton Mifflin},
  year      = {1994},
  isbn      = {978-0395670293},
}

@misc{nasa2008mailbox,
  author       = {{NASA}},
  title        = {Apollo 13 Lunar Module 'Mail Box'},
  howpublished = {NASA Image Article AS13-62-8929},
  year         = {2008},
  url          = {https://www.nasa.gov/image-article/apollo-13-lunar-module-mail-box/},
}

@book{pavlovic2025security,
  author    = {Dusko Pavlovic and Peter-Michael Seidel},
  title     = {Security Science ({SecSci}): Basic Concepts and Mathematical Foundations},
  year      = {2025},
  publisher = {arXiv preprint},
  note      = {arXiv:2504.16617v1 [cs.CR]},
  url       = {https://arxiv.org/abs/2504.16617},
}

@manual{modbus2012protocol,
  author       = {{Modbus Organization}},
  title        = {Modbus Application Protocol Specification V1.1b3},
  organization = {Modbus.org},
  year         = {2012},
  url          = {https://modbus.org/docs/Modbus_Application_Protocol_V1_1b3.pdf},
}

@article{lamport1974glitch,
  author  = {Leslie Lamport and Richard Palais},
  title   = {On the Glitch Phenomenon},
  year    = {1974},
  note    = {Originally rejected by IEEE Trans. on Computers (1976); republished as arXiv:1704.01154},
  url     = {https://lamport.azurewebsites.net/pubs/glitch.pdf},
}

@article{anderson1991glitch,
  author  = {James H. Anderson and Mohamed G. Gouda},
  title   = {A New Explanation of the Glitch Phenomenon},
  journal = {Acta Informatica},
  volume  = {28},
  number  = {4},
  pages   = {297--309},
  year    = {1991},
  doi     = {10.1007/BF01893884},
}

@misc{lolbins2025crowdstrike,
  author       = {{CrowdStrike}},
  title        = {2025 Global Threat Report: Living Off the Land Attacks},
  year         = {2025},
  url          = {https://www.crowdstrike.com/cybersecurity-101/cyberattacks/living-off-the-land-attack/},
}

@inproceedings{giller2015glitching,
  author    = {Brett Giller},
  title     = {Implementing Practical Electrical Glitching Attacks},
  booktitle = {Black Hat Europe},
  year      = {2015},
}

@misc{ta505lotl,
  author       = {{Proofpoint}},
  title        = {{TA505} Phishing Campaign: Living Off the Land},
  year         = {2018},
  url          = {https://www.proofpoint.com/us/threat-reference/living-off-the-land-attack},
}

\end{document}